\newtheorem{theorem}{Theorem}[section]
\newtheorem{lemma}[theorem]{Lemma}
\def\QED{\ensuremath{{\square}}}
\def\markatright#1{\leavevmode\unskip\nobreak\quad\hspace*{\fill}{#1}}
\newenvironment{proof}
{\begin{trivlist}\item[\hskip\labelsep{\bf Proof.}]}
{\markatright{\QED}\end{trivlist}}
\def\eps{\varepsilon}
\title{Locating a service facility and a rapid transit line}
\author{
J. M. D\'{\i}az-B\'{a}\~{n}ez\thanks{Departamento de Matem\'{a}tica
Aplicada II, Universidad de Sevilla, Spain. Partially supported by
project MEC MTM2009-08652. {\tt
\{dbanez,iventura\}@us.es.}} \and
M. Korman\thanks{Université libre de Bruxelles (ULB). {\tt
mkormanc@ulb.ac.be.}} \and
P. P\'erez-Lantero\thanks{Escuela de Ingenieria Civil en Inform\'atica Departamento de Computaci\'{o}n, Universidad de Valparaíso, Chile. Partially supported by
project MEC MTM2009-08652 and grant FONDECYT 11110069. {\tt
pablo.perez@uv.cl}}\and
I. Ventura\footnotemark[1]}
\DeclareMathOperator{\sgn}{sgn} %
\newcommand{\Lx}{\Pi_x}
\newcommand{\Ly}{\Pi_y}
\newcommand{\Lu}{\Pi_{x+y}}
\newcommand{\x}[1]{x({#1})}
\newcommand{\y}[1]{y({#1})}
\begin{document}
\maketitle

\linenumbers

\begin{abstract}
In this paper we study a facility location problem in the plane in which a single point
(facility) and a rapid transit line (highway) are simultaneously located in order to
minimize the total travel time of the clients to the facility, using the $L_1$ or
Manhattan metric. The rapid transit line  is represented by a line segment with fixed
length and arbitrary orientation. The highway is an alternative
transportation system that can be used by the clients to reduce their
travel time to the facility. This problem was introduced by Espejo
and Rodríguez-Chía in~\cite{espejo11}. They gave both a characterization of the optimal
solutions and an algorithm running in $O(n^3\log n)$ time, where $n$ represents the number
of clients. In this paper we show that Espejo and Rodríguez-Chía's algorithm does not
always work correctly.
At the same time, we provide a proper characterization of the solutions
and give an algorithm solving the problem in $O(n^3)$ time.
\end{abstract}

\textit{Keywords:} Geometric optimization; Facility location;
Transportation; Time distance.

\section{Introduction}

Suppose that we have a set of clients represented as a set of points in
the plane, and a service facility represented as a point to which
all clients have to move. Every client can reach the facility
directly or by using an alternative rapid transit line or highway,
represented by a straight line segment of
fixed length and arbitrary
orientation, in order to reduce the travel time. Whenever a client
moves directly to the facility, it moves at unit speed and the
distance traveled is the Manhattan or $L_1$ distance to the
facility. In the case where a client uses the highway, it travels
the $L_1$ distance at unit speed to one endpoint of the 
highway, traverses the entire highway with a speed greater than one,
and finally travels the $L_1$ distance from the other endpoint to the facility at unit speed. All clients traverse the
highway at the same speed. Given the set of points representing the clients,
the facility location problem consists in determining at the
same time the facility point and the highway in order to minimize
the \emph{total weighted travel time} from the clients to the facility. The
weighted travel time of a client is its travel time multiplied by
a weight representing the intensity of its demand. This problem was introduced
by Espejo and Rodríguez-Chía~\cite{espejo11}. We refer to~\cite{espejo11} and references
therein to review both the state of the art and
applications of this problem.

Geometric problems related to transportation networks have been recently 
considered in computational geometry.
Abellanas {\em et. al.} introduced the {\em time metric} model
in~\cite{abellanas03}: Given an underlying metric, the user can
travel at speed $v$ when moving along a highway $h$ or unit
speed elsewhere. The particular case in which the underlying metric
is the $L_1$ metric and all highways are axis-parallel segments of
the same speed, is called the {\em city metric}~\cite{aichholzer02}.
The optimal positioning of transportation systems that minimize the maximum travel time
among a set of points has been investigated in detail in recent papers~\cite{ahn07,
cardinal08,aloupis10}. Other more general models are studied in~\cite{korman08}. The
variant introduced by Espejo and
Rodríguez-Chía aims to minimize the sum of the travel times (transportation cost)  from
the demand points to the new facility service,
which has to be located  simultaneously with a highway. The highway is used by a demand
point whenever it saves time to reach the facility.

Notation to formulate the problem is as follows. Let $S$ be the set of $n$ client points;
$f$ the service facility point; $h$ the highway; $\ell$ the length of $h$; $t$ and
$t'$ the endpoints of $h$;  and
$v\geq 1$ the speed in which the points move along $h$. Let $w_p>0$ be the weight (or
demand) of a client point
$p$. Given a point $u$ of the plane, let $\x{u}$ and $\y{u}$ denote
the $x$- and $y-$coordinates of $u$ respectively. The distance or travel time
(see Figure~\ref{fig:distance}), between a point $p$ and the service
facility $f$ is given by the function
$$d_{t,t'}(p,f):=\min\left\{\|p-f\|_1,\|p-t\|_1+\frac{\ell}{v}+\|t'-f\|_1,
\|p-t'\|_1+\frac{\ell}{v}+\|t-f\|_1\right\}.$$

\begin{figure}[h]
    \centering
    \includegraphics[width=4.5cm]{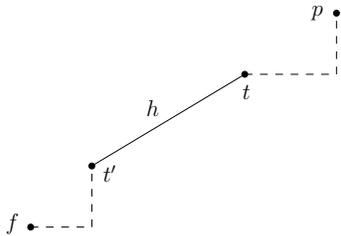}
    \caption{\small{The distance between a point $p$ and the facility $f$ using the highway.}}
    \label{fig:distance}
\end{figure}

Then the problem can be formulated as follows:

\begin{quote}
{\bf The Facility and Highway Location problem (FHL-problem)}: Given a set $S$ of $n$
points,  a weight $w_p>0$ associated with each point $p$ of $S$,   a fixed highway length
$\ell>0$, and a fixed speed $v\geq1$, locate a point (facility) $f$ and a line segment
(highway) $h$ of length $\ell$ with endpoints $t$ and $t'$ such that the function
$\sum_{p\in S}w_p \cdot d_{t,t'}(p,f)$ is minimized.
\end{quote}

Espejo and Rodríguez-Chía~\cite{espejo11} studied the FHL-problem and gave the
following characterization of the solutions. Consider the grid $G$
defined by the set of all axis-parallel lines passing through the
elements of $S$. They stated that there always exists an optimal
highway having one endpoint at a vertex of $G$. Based on this, they
proposed an $O(n^3\log n)$-time algorithm to solve the problem. In this paper we 
show that the
characterization given by Espejo and Rodríguez-Chía is not true in general, hence
their algorithm does not always give the optimal solution.

\paragraph{Addendum} An anonymous referee pointed out that the authors of~\cite{espejo11} published a {\em corrigendum} to their paper the 19th of January 2012, and that our result was not novel. In here we provide a chronological order of the events so that the reader can reach his/her own conclusions. The first version of this paper appeared the 5th of April 2011 on arXiv (and a preliminary version also appeared in the proceedings of the Spanish Meeting on Computational Geometry the 27th of June 2011). We contacted the authors of \cite{espejo11}, and provided them with a copy of our paper, including the counterexample. Naturally, they were interested in our research, and wanted to know where had they done a mistake. The 29th of October 2011, the authors of~\cite{espejo11} contacted us claiming that they had found the error in their paper. They provided us a write-up containing the corrected version of their proof, and suggested we combine our results. Given the difference in notation and the fact that this paper subsumes their result, we declined. From the conversation we can only deduce that the authors of~\cite{espejo11} submitted their corrigendum sometime in early November 2011.

As of now (16th March 2012), our paper is currently under supervision for journal publication, whereas the corrigendum has already appeared at COR. Although we would love if the submission, correction and publication process takes less than three months (as it appears to have happened with corrigendum at {\em Computers and Operations Research} journal), we understand that this is not possible in high-end journals. Regardless of our personal opinion of the actions of Espejo and Rodriguez-Chia, we believe that the date in which the result was found (and not published in a journal) is the relevant one. Thus, we claim that our paper is the first one to claim the error of~\cite{espejo11}. 

On a side note, we note that the corrigendum of Espejo and Rodriguez-Chia is also wrong, since they claim that our characterization is weaker. They specifically say that ``The description given by [this paper] means an infinite many number of candidates to be one of the endpoints of an optimal segment''. Although Lemma \ref{lemma:endpoint} does not explicitly say so, the algorithm of Section \ref{section:algorithm} only considers $O(n^3)$ cases (in particular a finite amount).

\paragraph{Paper Organization} In Section~\ref{section:properties} we first provide a proper
characterization of the solutions. 
After that we give a counterexample to the Espejo and
Rodríguez-Chía's characterization. We provide a set of
five points, all having weight equal to one, and prove that no
optimal highway has one endpoint in a vertex of $G$. In
Section~\ref{section:algorithm} we present an  improved algorithm running in
$O(n^3)$ time that correctly solves the FHL-problem. Finally, in
Section~\ref{section:conclusions}, we state our conclusions and proposal for
further research.

\section{Properties of an optimal solution}\label{section:properties}

A primary observation (also stated in~\cite{espejo11}) is that  the service facility can be
located at one of the endpoints of the rapid transit line. From now on, 
we assume $f=t'$ throughout the paper. This assumption simplifies the distance from
a point $p\in S$ to the facility to the following expression,

$$d_t(p,f)=\min\left\{\|p-f\|_1,\|p-t\|_1+\frac{\ell}{v}\right\}.$$

Using this observation, the expression of our objective function to minimize is
$\Phi(f,t)=\sum_{p\in S}w_p \cdot d_{t}(p,f)$. We call this value the {\em total
transportation cost} associated with $f$ and $t$ (or simply the {\em cost} of $f$ and
$t$). 

We say that a point $p$ uses the highway if $\|p-t\|_1+\frac{\ell}{v}<\|p-f\|_1$, and
that $p$ does not use it
(or goes directly to the facility) otherwise. Given $f$ and $t$, we call  \emph{travel bisector} of $f$ and $t$ (or \emph{ bisector} for short) as the set of points $z$ such that $\|z-f\|_1=\|z-t\|_1+\frac{\ell}{v}$,
see Figure~\ref{fig:bisector}. A geometrical description of such a bisector can be found
in~\cite{espejo11}, as the boundary of the so-called {\em captation region}.

\begin{figure}[h]
    \centering
    \includegraphics[width=14cm]{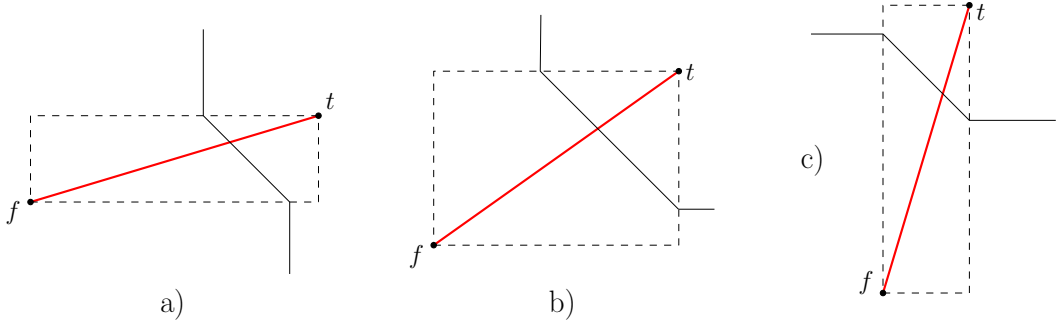} 
    \caption{\small{The bisector of $f$ and $t$.}}
    \label{fig:bisector}
\end{figure}

\begin{lemma}~\label{lemma:endpoint}
There exists an optimal solution to the FHL-problem satisfying one
of the next conditions:
\begin{itemize}
  \item[$(a)$] One of the endpoints of the highway is a vertex of $G$.
  \item[$(b)$] One endpoint of the highway is on a horizontal line
  of $G$, and the other endpoint is on a vertical line of $G$.
\end{itemize}
\end{lemma}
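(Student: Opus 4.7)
The plan is to start with an arbitrary optimal pair $(f^*, t^*)$ and rigidly translate the highway (that is, translate both $t^*$ and $f^*$ by the same vector $(\alpha,\beta)$) until two grid constraints become active. Let $H^*\subseteq S$ be the subset of clients that use the highway in this optimum, and for every $p\in S$ let $r_p\in\{f^*,t^*\}$ be the endpoint that $p$ walks to (so $r_p=t^*$ iff $p\in H^*$). Define the \emph{frozen} cost
\begin{equation*}
\Phi_{H^*}(\alpha,\beta)=\sum_{p\in H^*}w_p\Bigl(\|p-t^*-(\alpha,\beta)\|_1+\tfrac{\ell}{v}\Bigr)+\sum_{p\notin H^*}w_p\|p-f^*-(\alpha,\beta)\|_1.
\end{equation*}
Since the $L_1$ norm separates coordinatewise, this function decomposes as $g(\alpha)+h(\beta)+\text{const}$, with $g(\alpha)=\sum_p w_p\bigl|x(p)-x(r_p)-\alpha\bigr|$ and $h(\beta)=\sum_p w_p\bigl|y(p)-y(r_p)-\beta\bigr|$, both convex and piecewise linear.

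First I would verify that $(0,0)$ is a global minimizer of $\Phi_{H^*}$: the frozen cost dominates the true cost (committing to a fixed mode per client can only be suboptimal), the two agree at $(0,0)$, and $(f^*,t^*)$ minimizes the true cost; this chain forces $(0,0)$ to be optimal for $\Phi_{H^*}$. Consequently, any other minimizer $(\alpha^*,\beta^*)$ of $\Phi_{H^*}$ yields a translated pair $(\tilde f,\tilde t)=(f^*+(\alpha^*,\beta^*),\,t^*+(\alpha^*,\beta^*))$ whose true cost equals that of $(f^*,t^*)$, so is still optimal.

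Next I would invoke the classical fact that the minimum of a one-dimensional weighted sum of absolute values is attained at one of the breakpoints. Applying this to $g$ and $h$ independently, we may select $\alpha^*=x(p_1)-x(r_{p_1})$ and $\beta^*=y(p_2)-y(r_{p_2})$ for some clients $p_1,p_2\in S$. After the translation, $\tilde r_{p_1}$ (the image of $r_{p_1}$) lies on the vertical line $x=x(p_1)$ of $G$ while $\tilde r_{p_2}$ lies on the horizontal line $y=y(p_2)$ of $G$. A short case split on whether $r_{p_1}$ and $r_{p_2}$ refer to the same endpoint of the highway completes the argument: if they do, that endpoint is a vertex of $G$ (condition $(a)$); otherwise one endpoint sits on a vertical grid line and the other on a horizontal grid line (condition $(b)$).

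The main obstacle I foresee is justifying that the translated pair remains optimal for the \emph{true} cost even though the partition $H^*$ need no longer be the best partition at the new location; the sandwich inequality $\Phi_{\text{true}}\le\Phi_{H^*}$, combined with equality at $(0,0)$ and optimality of $(f^*,t^*)$, handles this cleanly, but it is the one step where care is needed rather than routine computation.
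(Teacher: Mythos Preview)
Your proof is correct and takes a genuinely different route from the paper's. The paper argues by local perturbation: it translates the highway horizontally by $\eps$, partitions $S$ into sets $S_1,S_2,S_3$ according to the sign of the cost change, and deduces from optimality that the linear coefficient $N$ and the ``bisector weight'' $k$ both vanish, so the cost is flat and one may slide until an endpoint meets a vertical grid line; then the same is repeated in $y$. The delicate step there is arguing that $S_3$ stays empty throughout the slide (i.e., no client crosses the bisector before a grid line is hit).

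Your freezing-plus-sandwich argument bypasses that issue entirely: by fixing the assignment $H^*$ you obtain a separable, piecewise-linear convex function whose minimum must sit at a breakpoint in each coordinate, and the inequality $\Phi_{\text{true}}\le\Phi_{H^*}$ with equality at the optimum transfers optimality back to the true cost without ever tracking bisector events. This is cleaner and slightly more conceptual; the paper's version is more hands-on but self-contained (it does not appeal to the weighted-median lemma). Both approaches naturally deliver the same dichotomy between cases $(a)$ and $(b)$ via the question of whether the two active breakpoints come from the same endpoint or from different endpoints.
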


\begin{proof}
Let $f$ and $t$ be the endpoints of an optimal highway $h$ and assume neither of
conditions $(a)$ and $(b)$ is satisfied. Using local perturbation we will transform this
solution into one that satisfies one of these conditions. 
Assume neither $f$ nor $t$ is 
on any vertical line of $G$. 
Let $\delta_1>0$ (resp. $\delta_2>0$) be the smallest value such
that if we translate $h$ with vector $(-\delta_1,0)$ (resp. $(\delta_2,0)$)
then either one endpoint of $h$ touches a vertical line of $G$ or a demand point hits
the bisector of $f$ and $t$.
Given $\eps\in[-\delta_1,\delta_2]$, let $f_{\eps}$,
$t_{\eps}$, and $h_{\eps}$ be $f$, $t$, and $h$ translated with
vector $(\eps,0)$, respectively. It is easy to see that
$|d_{t_{\eps}}(p,f_{\eps})-d_t(p,f)|=|\eps|$ for all points $p$. Given
a real number $x$, let $\sgn(x)$ denote the sign of $x$. We partition $S$ into three sets
$S_1$, $S_2$ and $S_3$ as follows:
\begin{eqnarray*}
  S_1 & = & \{p\in S~|~\sgn(d_{t_{\eps}}(p,f_{\eps})-d_t(p,f))=\sgn(\eps),~~\forall\eps\in[-\delta_1,\delta_2]\setminus\{0\}\} \\
  S_2 & = & \{p\in S~|~\sgn(d_{t_{\eps}}(p,f_{\eps})-d_t(p,f))=-\sgn(\eps),~~\forall\eps\in[-\delta_1,\delta_2]\setminus\{0\}\} \\
  S_3 & = & \{p\in S~|~\sgn(d_{t_{\eps}}(p,f_{\eps})-d_t(p,f))=-1,~~\forall\eps\in[-\delta_1,\delta_2]\setminus\{0\}\}
\end{eqnarray*}

Observe that points of $S_3$ are in the bisector of $f$ and $t$; $S_1$ contains the
demand points that travel rightwards to reach $f$ directly or by using the highway, and $S_2$ contains the points that travel leftwards. 

Theoretically, one could consider the case in which a point belongs to set $S_4  =  \{p\in S~|~\sgn(d_{t_{\eps}}(p,f_{\eps})-d_t(p,f))=1,~~\forall\eps\in[-\delta_1,\delta_2]\setminus\{0\}\}$. Geometrically speaking, the points of this set are those that, when translating the highway in either directions, the distance between them and the entry point of the highway increases. This situation can only happen when the point is aligned with the entry point. That is, point $p\in S_4$ if and only if either $(i)$ $p$ uses the highway to reach the facility and it is vertically aligned with $t$, or $(ii)$ $p$ walks to the facility and it is vertically aligned with $f$. However, by definition of $\delta_1$ and $\delta_2$, no point of $S$ can belong to (or enter) $S_4$ during the whole translation.

By the linearity of the $L_1$ metric, whenever we translate the highway $\eps$ units to
the right (for some arbitrarily small $\eps$, $0<\eps\leq \delta_1$), the highway will be
$\eps$ units closer for points in $S_2\cup S_3$, but $\eps$ units further away for points
of $S_1$. Analogously, the distance to the facility
decreases for points in $S_1 \cup S_3$ and increases for points of $S_2$ when translating
$h$ leftwards. Let $N=\sum_{p\in S_1}w_p-\sum_{p\in S_2}w_p$ and
$k=\sum_{p\in S_3}w_p$. Thus, for any
vector $(\eps,0)$,
$\eps\in[-\delta_1,\delta_2]\setminus\{0\}$, the change of the
objective function when we translate the highway with vector $(\eps,0)$ is equal
to the following expression:
\begin{eqnarray*}
\Phi(f_{\eps},t_{\eps})-\Phi(f,t)&=&\sum_{p\in S}w_p\cdot
d_{t_{\eps}}(p,f_{\eps})-\sum_{p\in S}w_p\cdot d_{t}(p,f)\\
&=& \eps\sum_{p\in S_1}w_p-\eps\sum_{p\in S_2}w_p-|\eps|\sum_{p\in S_3}w_p\\
&=&N\eps-k|\eps|
\end{eqnarray*}
Since we initially assumed that the location of $h$ is optimal, we must have both $N=k=0$ (otherwise translating $h$ rightwards or leftwards would result in a decrease of the objective fuction). In particular, we can translate $h$ in either direction so that the cost of the objective function is unchanged.  

More importantly, observe that the value of $k$ must remain $0$ on the whole translation: if at some point it becomes positive we can find a translation from that point that reduces the cost of the objective function. In particular, the set $S_3$ must remain empty during the whole translation. Any point that changes from set $S_1$ to $S_2$ (or {\em vice versa}) must first enter $S_3$. Since the latter set remains empty during the whole translation, no point can change between sets $S_1,S_2$, or $S_3$ until either $f$ or $t$ is vertically aligned with a point of $S$.

We perform the
same operations on the
$y$ coordinates and obtain that one of the two endpoints is
on a horizontal line of $G$, hence satisfying one of the two conditions
of the Lemma.
\end{proof}

When the highway's length is equal to
zero, the FHL-problem is the weighted 1-median problem in metric $L_1$~\cite{durier1985},
and in this case the item (a) of Lemma~\ref{lemma:endpoint} holds.
Espejo and Rodríguez-Chía~\cite{espejo11} claimed that there always
exists an optimal solution of the FHL-problem that satisfies
Lemma~\ref{lemma:endpoint}~(a). Unfortunately, this claim is not true in general
and their algorithm may miss some highway locations; indeed, it may miss the optimal
location and thus fail. We provide here one counterexample and the
following result.

\begin{lemma}~\label{lemma:counter-example}
There exists a set of unweighted points in which no optimal solution to the
FHL-problem satisfies Lemma~\ref{lemma:endpoint}~(a).
\end{lemma}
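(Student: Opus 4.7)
The plan is to exhibit an explicit set of five unit-weight points together with specific values of the highway length $\ell$ and speed $v$, and then verify by direct computation that no highway with an endpoint at a vertex of the grid $G$ is optimal. Since the objective $\Phi$ is piecewise linear with a tame combinatorial structure, the verification reduces to a finite case analysis; the creative content of the proof is entirely in choosing a configuration for which this analysis succeeds.

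To keep the number of cases manageable, I would choose a point set with a non-trivial symmetry, for instance bilateral symmetry about a line that does not pass through any vertex of $G$. Such a symmetry both suggests a natural candidate optimum (a highway respecting the same reflection) and partitions the $O(n^2)$ vertices of $G$ into a constant number of equivalence classes, so only a handful of representatives of Lemma~\ref{lemma:endpoint}$(a)$ need to be examined explicitly. Placing one of the five points on the axis of symmetry and two symmetric pairs off of it keeps every grid vertex asymmetric, so the symmetric candidate cannot have an endpoint at a grid vertex, although it naturally satisfies condition $(b)$, with one endpoint on a horizontal grid line and the other on a vertical one. To locate the symmetric candidate precisely, I would write $\Phi$ restricted to the symmetric family as a piecewise-linear function of the free parameters describing a symmetric highway (midpoint along the axis and orientation), and find its minimum at an explicit breakpoint determined by the points crossing the travel bisector described in Section~\ref{section:properties}.

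For each equivalence class of Lemma~\ref{lemma:endpoint}$(a)$ candidates, $\Phi$ restricted to that class is again a piecewise-linear function of the remaining free parameters (the position of the free endpoint along the grid line through the fixed vertex, and the orientation), whose minimum can be computed in closed form and compared with the value at the symmetric candidate. The main obstacle will be tuning $\ell$, $v$, and the five coordinates so that the strict inequality between the two persists: if the highway is too short or too slow, the problem degenerates toward the weighted $1$-median, for which condition $(a)$ is always achievable, while if it is too long or too fast the optimum becomes trivial in a different way. I expect the inequality to hold on an open region of parameter space, from which a concrete choice can be located numerically and then verified exactly.
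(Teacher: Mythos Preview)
Your proposal is a strategy outline rather than a proof: for an existence statement of this kind the entire content is the explicit example together with its verification, and you supply neither. More seriously, the symmetry heuristic does not do what you claim. A highway invariant under your reflection has its two endpoints \emph{swapped}, not individually fixed, so nothing prevents both endpoints from sitting at a mutually reflected pair of grid vertices; the assertion that ``the symmetric candidate cannot have an endpoint at a grid vertex'' is therefore unjustified. In addition, the only reflections preserving the $L_1$ metric have axes of slope in $\{0,\infty,\pm 1\}$, and placing a point $p$ on such an axis makes $p$ itself a grid vertex on the axis, contradicting your claim that ``every grid vertex [is] asymmetric''. Finally, since $f$ and $t$ play different roles in $d_t(p,f)$, the existence of a reflection-invariant optimum is not automatic and would itself need an argument.

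The paper proceeds quite differently. It writes down a concrete \emph{asymmetric} five-point set, takes $\ell=\sqrt{180}$, $v=\ell$, and proposes the highway $f=(12,6)$, $t=(0,0)$, which satisfies condition~(b) but not~(a). Most grid vertices are eliminated not by symmetry but by reusing the translation argument from the proof of Lemma~\ref{lemma:endpoint}: for any endpoint on the lines $x=-4$, $x=-3$, or $x=13$, a small horizontal shift strictly decreases $\Phi$. The handful of remaining candidates (those with an endpoint on $x=12$) are dispatched by first showing the free endpoint must lie on $y=0$, and then checking that the one-parameter variation $g(\varepsilon)=2\varepsilon-\bigl(\sqrt{36+24\varepsilon-\varepsilon^2}-6\bigr)$ is strictly positive for $\varepsilon\neq 0$.
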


\begin{proof}
Consider the problem instance with five points whose coordinates are $(-4,0)$, $(-3,-1)$, $(12,8)$,
$(13,5)$, and $(13,7)$, respectively (see Figure~\ref{fig:counter-example}). In the problem instance, we give unit weight to all points, and set the length $h$ of the highway as $\ell=\sqrt{180}\approx 13,5$. For simplicity in the calculations, we also set $v=\ell$, but any other large number works as well. The cost associated to the highway of endpoints $f=(12,6)$ and $t=(0,0)$ is $10+2\ell/v=12$. We claim that this location is better than any other solution with an endpoint at a vertex of $G$. 

\begin{figure}[h]
    \centering
    \includegraphics[width=12cm]{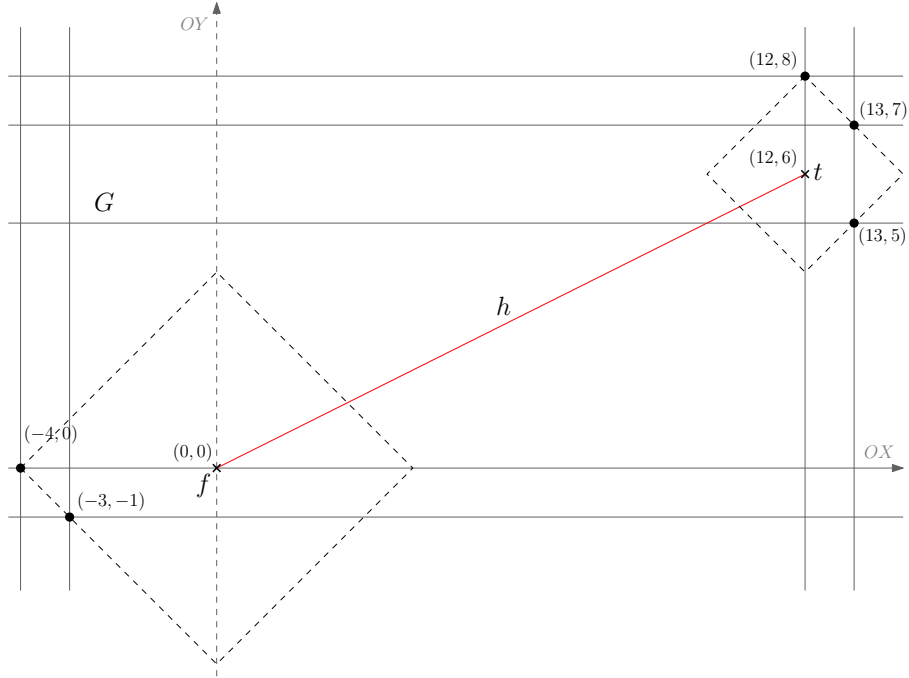} 
    \caption{\small{A counterexample to the algorithm of Espejo and Rodríguez-Chía.}}
    \label{fig:counter-example}
\end{figure}

If one endpoint of $h$ is a vertex of
$G$ in the line $x=-3$, then the other endpoint is located to the
left of the line $x=11$ because $-3+\ell<11$. In
that case we can translate $h$ rightwards with vector
$(\frac{1}{2},0)$ improving the objective function. The same holds
if one endpoint of $h$ is a vertex in the line $x=-4$. Similarly, if
one endpoint is a vertex in the line $x=13$, then we can translate
$h$ leftwards with vector $(-\frac{1}{2},0)$ and the objective
function decreases. 

Consider now locating one of the highway endpoints at coordinates $(12,0)$ or $(12,-1)$. Observe that the walking time (i.e., the traveling time when the highway is not used) from the points $(-4,0)$ and $(-3,-1)$ takes at least $15$ units of time, which is more than the cost associated with our solution. The same happens to the sum of the traveling times of the three other points. Hence, if $f$ is located at one of the two vertices, the five points must use highway (otherwise the travel time is higher than our solution). Analogously, if $t$ is located at grid points $(12,0)$ or $(12,-1)$, no point of $S$ will use the highway. In either case, the corresponding solution is at least as high as the sum of distances from all points of $S$ to the geometric median, which is higher than the cost associated with our solution.

Consider now the cases in which one of the endpoints has
coordinates $(12,y_0)$ for some $y_0\in\{5,7,8\}$. We start by showing
that, in any of the three cases, the optimal position of the other
endpoint of the highway (denoted by $e$) must lie on the line $y=0$.
Since the highway's length is equal to $\ell$, the possible positions of $e$ lie
both in circle $\sigma$ of radius $\ell$ centered at $(12,y_0)$ and to the left of line
$x=12$. 
Observe that the clients that
walk to $e$ are points $a=(-4,0)$ and $b=(-3,-1)$, located always
to the left of $e$. 
Hence, we are interested in minimizing the
expression $\|a-e\|_1+\|b-e\|_1$.
Let $a',b'\in\sigma$ denote respectively the closest points to $a$ and $b$
with the $L_1$ metric, which verify $y(a')=0$ and $y(b')=-1$. 
Observe that if $y(e)>0$ then $\|a-a'\|_1<\|a-e\|_1$ and $\|b-a'\|_1<\|b-e\|_1$ implying
$$\|a-a'\|_1+\|b-a'\|_1<\|a-e\|_1+\|b-e\|_1$$
(see Figure~\ref{fig:counter-example2} a)). Similarly, if $y(e)<-1$, then
$$\|a-b'\|_1+\|b-b'\|_1<\|a-e\|_1+\|b-e\|_1.$$
Therefore, $e$ must satisfy $-1\leq y(e)\leq 0$ (see Figure~\ref{fig:counter-example2}
b)). In this case we have
\begin{eqnarray*}
\|a-e\|_1+\|b-e\|_1&=&x(e)-x(a)+y(a)-y(e)+x(e)-x(b)+y(e)-y(b)\\
  &=& 2x(e) + 8
\end{eqnarray*}
Then $\|a-e\|_1+\|b-e\|_1$ is minimized when $x(e)$ is minimum, and it
happens when $y(e)=0$.
\begin{figure}[h]
    \centering
    \includegraphics[width=10cm]{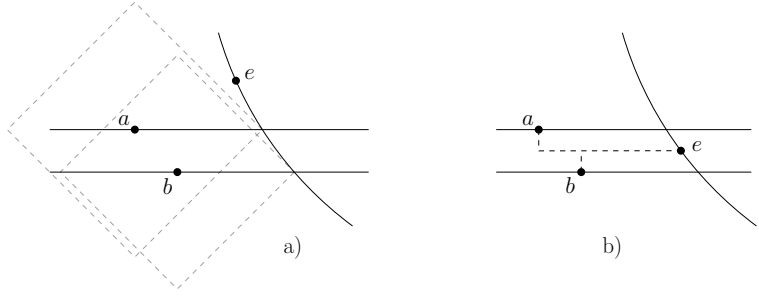} 
    \caption{\small{$a=(-4,0)$ and $b=(-3,-1)$. 
    When one endpoint of the highway has coordinates $(12,8)$, $(12,7)$, or $(12,5)$, the
    optimal position of the other endpoint $e$ is on the line $y=0$.}}
    \label{fig:counter-example2}
\end{figure}

If $y_0=8$, then $h$ can be translated downwards with vector
$(0,-\frac{1}{2})$ and the value of the objective function decreases. 
Thus point $(12,8)$ is discarded. It remains
to show that there is a solution better than the one having an
endpoint at either $(12,7)$ or $(12,5)$, and the other endpoint on
the line $y=0$. Observe that if $f$ and $t$ belong to the lines $y=0$
and $x=12$, respectively, then by exchanging $f$ and $t$ the value
of the objective function reduces in $\ell/v$. Then consider the case
where $y(t)=0$ and $x(f)=12$.  

Let $t=(0,0)$ and $f=(12,6)$.
Given a value $\eps$, let $t_{\eps}$ be the point with coordinates
$(\eps,0)$ and $f_{\eps}$ be the point in the line $x=12$ such that
$\y{f_{\eps}}>0$ and the Euclidean distance between $f_{\eps}$ and
$t_{\eps}$ is equal to $\ell$ (see
Figure~\ref{fig:counter-example3}). Let $[-\delta_1,\delta_2]$,
$\delta_1,\delta_2>0$, be the maximal-length interval such that $5
\leq \y{f_{\eps}}\leq 7$ for all $\eps\in[-\delta_1,\delta_2]$.
Note $\delta_1=\sqrt{155}-12<1$ and $\delta_2=12-\sqrt{131}<1$. Then $|\eps|<1$.
\begin{figure}[h]
    \centering
    \includegraphics[width=8.5cm]{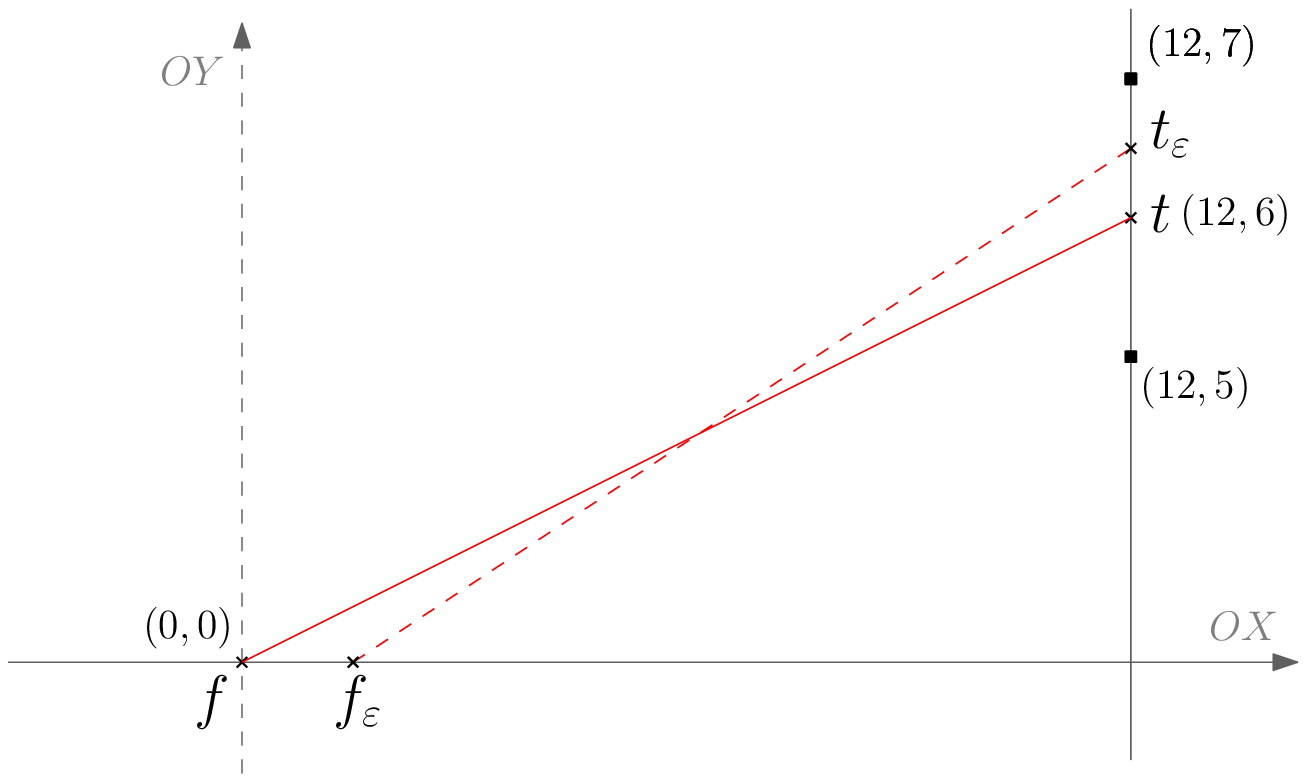}
    \caption{\small{Definitions of $f_{\eps}$ and $t_{\eps}$.}}
    \label{fig:counter-example3}
\end{figure}

The variation of the objective function's value when $f$ and $t$ are moved
to $f_{\eps}$ and $t_{\eps}$, respectively, is equal to
\begin{eqnarray*}
 g(\eps)&:=&\Phi(f_{\eps},t_{\eps})-\Phi(f,t)\\
&=&2\left(\x{t_{\eps}}-\x{t}\right)-\left(\y{f_{\eps}}-\y{f}\right)\\
&=&2\eps-\left(\sqrt{36+24\eps-\eps^2}-6\right).
\end{eqnarray*}
In the following we will show that $\sqrt{36+24\eps-\eps^2}<6+2\eps$, for all $\eps\in[-\delta_1,\delta_2]\setminus\{0\}$. In particular, we will have $g(\eps)>0$ (except when $\eps=0$), implying that our highway location is optimal. First observe that 
$4\eps^2+24\eps+36=(2\eps+6)^2>36+24\eps-\eps^2$. 
Since $|\eps|<1$ then $2\eps+6>0$ and $36+24\eps-\eps^2>0$, which implies
$2\eps+6>\sqrt{36+24\eps-\eps^2}$. Thus $g(\eps)>0$ and 
the highway with endpoints $f$ and $t$ gives a better solution than that
having an endpoint at $(12,7)$ or $(12,5)$. This completes the proof.
\end{proof}

In the next section we provide a correct algorithm
that solves the problem in $O(n^3)$ time.
We assume general position, that is, there are no two points on a
same line having slope in the set $\{-1,0,1,\infty\}$.

\section{The algorithm}\label{section:algorithm}

Lemma~\ref{lemma:endpoint} can be used to find an optimal solution
to the FHL-problem. Although the method is quite similar for both cases in
Lemma~\ref{lemma:endpoint}, we address the two cases independently for the sake of
clarity.
By Vertex-FHL-problem we will denote the FHL-problem for the cases in which
Lemma~\ref{lemma:endpoint} a) holds, and by
Edge-FHL-problem the FHL-problem for the cases in which
Lemma~\ref{lemma:endpoint} b) holds. In the next subsections we give an
$O(n^3)$-time algorithm for each variant of the problem. In both of
them we assume w.l.o.g. that highway's length $\ell$ is equal to
one.

In the following $\theta$ will denote the positive angle of the highway with
respect to the positive direction of the $x$-axis. For the sake of clarity, we will assume
that $\theta\in[0,\frac{\pi}{4}]$. When $\theta$ belongs to
the interval $[k\frac{\pi}{4},(k+1)\frac{\pi}{4}]$, $k=1,\dots,7$,  both  the
Vertex- and  Edge-FHL-problem
can be solved in a similar way.

Given a point $u$ and an angle $\theta$, let $u(\theta)$ be the
point with coordinates $(\x{u} + \cos\theta, \y{u} + \sin\theta)$.
There exists an angle
$\phi\in[0,\frac{\pi}{4}]$ such that the bisector of the endpoints $f$ and
$t=f(\theta)$ has the shape in Figure~\ref{fig:bisector} a) for all
$\theta\in[0,\phi)$, and has the shape in Figure~\ref{fig:bisector}
b) for all $\theta\in(\phi,\frac{\pi}{4}]$. Such an angle $\phi$ verifies
$\cos(\phi)-\sin(\phi)=\frac{1}{v}$. Furthermore,
$\phi=\frac{1}{2}\arcsin(1-\frac{1}{v^2})$ and
$\phi\neq\frac{\pi}{4}$ unless $v$ is infinite. Refer to~\cite{espejo11}
for a detailed description of this situation.

Let $\Lx$, $\Ly$, and $\Lu$ denote the point set $S$  sorted according to the $x$-, $y$-,
and $(x+y)$-order, respectively.

\subsection{Solving the Vertex-FHL-problem}

For each vertex $u$ of $G$ we can solve the problem subject to $f=u$ or $t =u$. We show
how to obtain a solution if $f=u$. The case where $t =u$ can be solved analogously.

Suppose w.l.o.g. that the vertex $f=u$ is the origin of the coordinate system and the
highway angle 
is $\theta$, for $\theta\in[0,\frac{\pi}{4}]$. 
Then $t=u(\theta)=(\cos\theta,\sin\theta)$.
and the distance $d_t(p,f)$ between a point $p\in S$ and the facility $u$ has the
expression $c_1+c_2\cos\theta+c_3\sin\theta$, where $c_1,c_2,c_3$
are constants satisfying
$c_2,c_3\in\{-1,0,1\}$.
When $\theta$ goes from $0$ to
$\frac{\pi}{4}$, this expression changes at the values of $\theta$
such that:

\begin{itemize}
    \item The point $p$ switches from using the highway to going directly to
    the facility (or vice versa). We call these changes \emph{bisector
    events}. A bisector event occurs when the bisector between the highway's endpoints
    $u$ and $u(\theta)$, contains $p$. At most two bisector events are obtained for each point $p$.

    \item The highway endpoint $u(\theta)$ crosses the vertical or horizontal
    line passing through $p$. We call this event \emph{grid event}. Again, each
    point of $S$ generates at most two grid events.

    \item $\theta=\phi$. We call it the $\phi$\emph{-event}.
\end{itemize}

We refer the interested reader to~\cite{espejo11} for a detailed description of the above events\footnote{Although their events are very similar to the ones we described, the authors of~\cite{espejo11} refer to them as {\em projection} and {\em limit points}. We prefer to use the term ``event'', since ``point'' is reserved for the elements of $S$}. The cost of their algorithm is dominated by the time spent sorting the order in which events take place. In order to avoid this time, we use the following result: 

\begin{lemma}\label{lemma:events}
After an $O(n\log n)$-time preprocessing, the angular order of all
the events associated with a given vertex of $G$ can be obtained in
linear time.
\end{lemma}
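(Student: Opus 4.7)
Plan. The preprocessing is the obvious one: sort $S$ by $x$-coordinate, by $y$-coordinate, and by $x+y$ (and, if convenient, by $x-y$) to obtain the lists $\Lx$, $\Ly$, $\Lu$ in $O(n\log n)$ total time. These lists are computed once and reused for every vertex $u\in G$. Fix a vertex $u$ and translate so that $u$ is the origin; for each $p\in S$ write $q=p-u$. There are $O(n)$ events in total (two grid events and at most two bisector events per point, plus the single $\phi$-event), so it suffices to partition them into $O(1)$ classes, show that within each class the angular order coincides with the restriction of one of the precomputed lists to that class, and then merge.

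Grid events are immediate: the vertical grid event of $p$ happens at $\theta=\arccos(\x{q})$ (defined only when $\x{q}\in[\tfrac{\sqrt2}{2},1]$), which is monotone in $\x{q}$, so the angular order of the eligible vertical grid events is the reverse of $\Lx$ restricted to the eligible points. Horizontal grid events are analogous via $\Ly$. A single linear scan of $\Lx$ and $\Ly$ extracts both sorted sublists. For bisector events, the defining equation is $\|q\|_1-\|q-(\cos\theta,\sin\theta)\|_1=\tfrac{1}{v}$. Breaking into the $L_1$-quadrants of $q$ and of $q-(\cos\theta,\sin\theta)$, and further into the cases $\theta<\phi$ and $\theta>\phi$, gives $O(1)$ regions; in each region the left-hand side becomes $\alpha\cos\theta+\beta\sin\theta+\delta_p$ with $\alpha,\beta\in\{-1,0,1\}$ fixed for the region and $\delta_p$ affine in one of $\x{p},\y{p},\x{p}+\y{p},\x{p}-\y{p}$. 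Hence the equation reduces to $\alpha\cos\theta+\beta\sin\theta=\gamma_p$ with $\gamma_p$ monotone in one of those four quantities. Since $\alpha\cos\theta+\beta\sin\theta$ is monotone on $[0,\phi]$ and on $[\phi,\tfrac{\pi}{4}]$ whenever $(\alpha,\beta)\neq(0,0)$, the event angle is monotone in $\gamma_p$, and the angular order of bisector events inside the region coincides (up to reversal) with the restriction of one of $\Lx,\Ly,\Lu$ (or $\Pi_{x-y}$) to that region.

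To assemble the output, perform one pass over each precomputed list, assigning each point of $S$ to its region, discarding points whose event angle falls outside $[0,\tfrac{\pi}{4}]$, and producing the $O(1)$ sorted sublists described above in $O(n)$ time; then merge these $O(1)$ sublists and insert the $\phi$-event in a final linear sweep. The main obstacle, and the only nontrivial part of the argument, is the case analysis for the bisector event: one must carefully enumerate the $L_1$-subregions of $q$ relative to both $u$ and $u(\theta)$, account for the change of bisector shape at $\theta=\phi$, and verify that the resulting $O(1)$ pieces of the equation each have an event angle monotone in one of the four precomputed orderings. Once this case analysis is in place, the rest is bookkeeping.
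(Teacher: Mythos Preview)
Your proposal is correct and follows essentially the same approach as the paper: precompute $\Lx$, $\Ly$, $\Lu$, decompose the bisector events into $O(1)$ classes each governed by an equation of the form $\alpha\cos\theta+\beta\sin\theta=\gamma_p$ with $\gamma_p$ monotone in one of the precomputed orderings, and merge. The paper carries out the case analysis explicitly---three bisector-event types $\alpha_p,\beta_p,\gamma_p$ corresponding respectively to the diagonal segment and the two axis-parallel half-lines of the bisector---and shows that only $\Lx$, $\Ly$, $\Lu$ are needed, so your tentative $\Pi_{x-y}$ is unnecessary for $\theta\in[0,\tfrac{\pi}{4}]$.
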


\begin{proof}
The preprocessing consists in computing $\Lx$, $\Ly$, and $\Lu$,
which can be done in $O(n\log n)$ time. Now, let $u$ be a vertex of
$G$. It is straightforward to see that they are $O(n)$ grid events
and that we can obtain their angular order in linear time by using
both $\Lx$ and $\Ly$. Let us show how to obtain  the
bisector events in $O(n)$ time.

The bisector of $u$ and $u(\theta)$ consists of two axis-aligned
half-lines and a line segment with slope -1 connecting their endpoints
(see Figure~\ref{fig:bisector} and~\cite{espejo11} for further details). Given a point
$p$, when $\theta$ goes from $0$ to $\pi/4$ the bisector between $u$
and $u(\theta)$ passes through $p$ at most twice, that is, when $p$
belongs to one of the half-lines of the bisector and when $p$
belongs to the line segment. If $p$ belongs to the line segment of the
bisector  then the event is denoted by $\alpha_p$ (see
Figure~\ref{fig:bisector-events} b)). If $p$ belongs to the leftmost
half-line of the bisector, which is always vertical, we denote that
event by $\beta_p$ (see Figure~\ref{fig:bisector-events} a)).
Otherwise, if $p$ belongs to the rightmost half-line which can be
either vertical or horizontal we denote that event by $\gamma_p$
(see Figure~\ref{fig:bisector-events} c) and d)). Observe that if
the rightmost half-line is vertical then $\gamma_p<\phi$, otherwise
$\gamma_p>\phi$. Refer to~\cite{espejo11} for a characterization to 
identify whether a point $p\in S$ generates a bisector event for some angle $\theta$.

\begin{figure}[h]
    \centering
    \includegraphics[width=12cm]{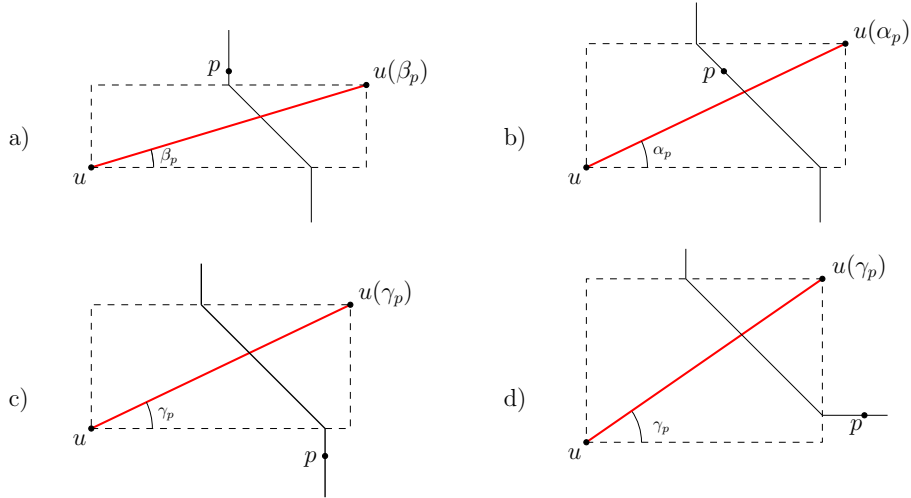}
    \caption{\small{The bisector events of $p$ when $\theta\in[0,\frac{\pi}{4}]$. a) $p$ belongs
    to the leftmost half-line of the bisector of $u$ and $u(\theta)$. b) $p$ belongs to the
    segment.
    c,d) $p$ belongs to the rightmost half-line of the bisector.}}
    \label{fig:bisector-events}
\end{figure}

Let $\Pi_1$ be the subsequence of $\Lu$ containing all elements $p$
such that $\alpha_p \in [0,\frac{\pi}{4}]$, $\Pi_2$ be the
subsequence of $\Lx$ containing all elements $p$ such that $\beta_p \in
[0,\frac{\pi}{4}]$, and $\Pi_3$ be the subsequence of
$\Lx$ that contains all elements $p$ such that $\y{p}<\y{u}$ and
$\gamma_p \in [0,\phi]$, concatenated with the
subsequence of $\Ly$ that contains all elements $p$ such that
$\x{p}>\x{u}$ and $\gamma_p \in [\phi,\frac{\pi}{4}]$. Given a
point $p\in S$, the corresponding events of $p$ in $[0,\frac{\pi}{4}]$
can be found in constant time, thus $\Pi_1$,
$\Pi_2$, and $\Pi_3$ can be built in linear time.

The following statements are true for any point $p\in S$:
\begin{itemize}
  \item[$(a)$] $\x{p}+\y{p}=\frac{1}{2}(\cos\alpha_p+\sin\alpha_p+\frac{1}{v})$ for all points $p$ in $\Pi_1$.
  \item[$(b)$] $\x{p}=\frac{1}{2}(\cos\beta_p-\sin\beta_p+\frac{1}{v})$ for all points $p$ in $\Pi_2$.
  \item[$(c)$] $\x{p}=\frac{1}{2}(\cos\gamma_p+\sin\gamma_p+\frac{1}{v})$ for all points $p$ in $\Pi_3$ such that $\gamma_p<\phi$.
  \item[$(d)$] $\y{p}=\frac{1}{2}(-\cos\gamma_p+\sin\gamma_p+\frac{1}{v})$ for all points $p$ in $\Pi_3$ such that $\gamma_p>\phi$.
\end{itemize}

Let $\Gamma_1$ (resp. $\Gamma_2$, $\Gamma_3$) be the sequence
obtained by replacing each element $p$ in $\Pi_1$ (resp. $\Pi_2$,
$\Pi_3$) by $\alpha_p$ (resp. $\beta_p$, $\gamma_p$). Therefore,
from statements $(a)-(d)$ and the monotonicity of the functions
$\cos\theta+\sin\theta$, $\cos\theta-\sin\theta$, and
$-\cos\theta+\sin\theta$ in the interval $[0,\frac{\pi}{4}]$, we
obtain that $\Gamma_1$, $\Gamma_2$, and $\Gamma_3$ are sorted
sequences. Using a standard method for merging sorted lists, we can
merge in linear time $\Gamma_1$, $\Gamma_2$, $\Gamma_3$, the grid
events, and the $\phi$-event. Therefore, the angular order of all
events associated with a vertex $u$ can be obtained in $O(n)$ time and the result follows.
\end{proof}

\begin{theorem}\label{theorem:vertex}
The Vertex-FHL-problem can be solved in $O(n^3)$ time.
\end{theorem}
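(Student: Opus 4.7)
The plan is to perform, for each vertex $u$ of $G$, an angular sweep that maintains the objective function as a closed-form expression in $\theta$, and extracts the optimum for the current vertex in linear time. Since $G$ has $O(n^2)$ vertices, this will yield the claimed $O(n^3)$ bound.

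First I would carry out a one-time $O(n\log n)$ preprocessing to compute the sorted sequences $\Lx$, $\Ly$, and $\Lu$, as required by Lemma~\ref{lemma:events}. By the symmetry under reflections and rotations by multiples of $\pi/4$, it suffices to describe the algorithm for the sub-case $f=u$, $\theta\in[0,\pi/4]$; the remaining fifteen sub-cases (including $t=u$ and the intervals $[k\pi/4,(k+1)\pi/4]$ for $k=1,\dots,7$) are handled in the same way.

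Fix such a vertex $u$. As noted in the paper, each summand of $\Phi$ has the form $c_1^p + c_2^p\cos\theta + c_3^p\sin\theta$ with $c_2^p,c_3^p\in\{-1,0,1\}$, so $\Phi(\theta)=C_1+C_2\cos\theta+C_3\sin\theta$ on any interval of $\theta$ throughout which all of the $c_i^p$ remain constant, where $C_i=\sum_{p\in S} w_p c_i^p$. The coefficients $c_i^p$ change only at the three types of events (bisector, grid, and the $\phi$-event); by Lemma~\ref{lemma:events} these $O(n)$ events can be listed in angular order in $O(n)$ time. I would then initialize $(C_1,C_2,C_3)$ at $\theta=0$ in $O(n)$ time by classifying every point of $S$, and sweep through the events in angular order. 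Between two consecutive events I would minimize $C_1+C_2\cos\theta+C_3\sin\theta$ on the current sub-interval in $O(1)$ time (elementary single-variable calculus: compare the values at the two endpoints with the value at $\arctan(C_3/C_2)$ whenever this critical angle lies inside the sub-interval), tracking the global minimum over the whole sweep.

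The main obstacle is arguing that each event induces only an $O(1)$ update to $(C_1,C_2,C_3)$. For a bisector event, a single point $p$ switches between the two branches of $d_t(p,f)$; for a grid event, the sign of $\x{p}-\x{u(\theta)}$ or $\y{p}-\y{u(\theta)}$ flips for one point; for the $\phi$-event, only the parametric form of the bisector changes. In each case only one summand of $\Phi$ is affected, and its new coefficients can be recomputed from the local geometry in constant time. Consequently the work per vertex is $O(n)$, and multiplying by the $O(n^2)$ vertices of $G$ gives the desired $O(n^3)$ total running time.
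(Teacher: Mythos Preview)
Your approach is essentially identical to the paper's: fix a vertex, perform an angular sweep over the $O(n)$ events provided by Lemma~\ref{lemma:events}, maintain the coefficients $(C_1,C_2,C_3)$ of $\Phi(\theta)=C_1+C_2\cos\theta+C_3\sin\theta$, and optimize in $O(1)$ on each sub-interval; multiply by the $O(n^2)$ vertices of $G$.

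There is one inaccuracy worth flagging. You assert that the $\phi$-event affects ``only one summand of $\Phi$'' and hence costs $O(1)$. This is not correct. At $\theta=\phi$ (where $\cos\phi-\sin\phi=1/v$) the travel bisector degenerates: for every point $z$ with $\x{z}>\cos\phi$ and $\y{z}<0$ one has $\|z-f\|_1-\|z-t\|_1=\cos\theta-\sin\theta$, so the whole quadrant lies on the bisector simultaneously. Consequently \emph{all} demand points in that quadrant flip between ``use the highway'' (contributing $(-1,1)$ to $(c_2^p,c_3^p)$) and ``walk'' (contributing $(0,0)$) as $\theta$ crosses $\phi$, and $\Theta(n)$ summands may change at once. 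The paper accounts for this by allowing $O(n)$ time at the single $\phi$-event; since there is only one such event per octant, the per-vertex cost remains $O(n)$ and your overall $O(n^3)$ bound is unaffected. Your argument would be correct if you either (i) budget $O(n)$ for the $\phi$-event as the paper does, or (ii) observe that these simultaneous flips are already enumerated among the $O(n)$ individual bisector events $\gamma_p$ and are each processed in $O(1)$.
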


\begin{proof}
Let $u$ be a vertex of $G$. Using Lemma~\ref{lemma:events}, we
obtain in linear time the angular order of the $O(n)$ events
associated with $u$. The events induce a partition of
$[0,\frac{\pi}{4}]$ into maximal intervals. For each of those
intervals, the objective function takes the form $g(\theta):=
\Phi(f,t)=\Phi(u,u(\theta))=b_1+b_2\cos\theta+b_3\sin\theta$,
where $b_1,b_2,b_3$ are constants.
This problem is of constant size in each subinterval and the minimum of
$g(\theta)$ can be found in $O(1)$ time. Furthermore, the expression of
$g(\theta)$ can be updated in constant time when $\theta$ crosses an
event point distinct of $\phi$ when going from $0$ to
$\frac{\pi}{4}$. In the case where $\theta$ crosses $\phi$,
$g(\theta)$ can be updated in at most $O(n)$ time. Then the problem
subject to $f=u$ can be solved in linear time. The case in which $t=u$ can be addressed in a similar way.
It gives an overall
$O(n^3)$ time complexity because $G$ has $O(n^2)$ vertices. \end{proof}

\subsection{Solving the Edge-FHL-problem}

We now consider the case in which the optimal solution satisfies condition b) of  Lemma~\ref{lemma:endpoint}.
Namely, we consider a horizontal line $e_h$ of $G$ and each vertical line
$e_v$ of $G$. For every pair of such lines, we consider eight different sub-cases, depending on whether $h$ is located above/below $e_h$, rightwards/leftwards of $e_v$, and $f\in e_h$ and $t\in e_v$ (or {\em vice versa}). For a fixed sub-case, we parametrize the location of the highway by the angle $\theta$ that the highway forms with $e_h$. As in the Vertex-FHL case, we assume that $f\in e_h$, $t\in e_v$, and $\theta\in[0,\frac{\pi}{4}]$. 

We implicitly redefine the coordinate system so that $e_h$ and $e_v$  intersect at the origin $o$. Let $\theta\in[0,\frac{\pi}{4}]$ be the positive angle of the highway with respect to
the positive direction of the $x$-axis and $f=x_{\theta}$, $t=y_{\theta}$ be the highway endpoints, see Figure~\ref{fig:edge-FHL}. 

First notice that, since we are again doing a continuous translation of $h$, the events that affect the value of the objective function are exactly the same as those that happen in the Vertex-FHL-problem: bisector-, grid- and $\phi$- events. We start by showing that the equivalent of Lemma \ref{lemma:events} also holds: 

\begin{figure}[h]
    \centering
    \includegraphics[width=8cm]{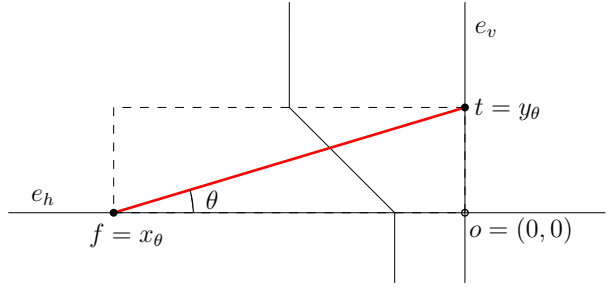}
    \caption{\small{Solving the Edge-FHL-problem.}}
    \label{fig:edge-FHL}
\end{figure}

\begin{lemma}\label{lemma:events2}
After an $O(n\log n)$-time preprocessing, the angular order of all
the events associated with a pair  of perpendicular lines of $G$
 can be obtained in linear time.
\end{lemma}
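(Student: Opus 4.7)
My plan is to mirror the blueprint of Lemma~\ref{lemma:events} as closely as possible. The preprocessing is identical: sort $S$ to obtain $\Lx$, $\Ly$, and $\Lu$ in $O(n\log n)$ time; this cost is paid once and amortized across all pairs $(e_h,e_v)$. After translating so that $e_h\cap e_v$ is the origin, the endpoints become $f=(\cos\theta,0)\in e_h$ and $t=(0,\sin\theta)\in e_v$ for $\theta\in[0,\pi/4]$, and only three kinds of events (grid, bisector, and $\phi$) can alter the objective, with $O(n)$ of them in total. The final step is a standard multi-way merge of a constant number of already-sorted sublists in linear time.

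Grid events are easy to handle: $f$ crosses the vertical grid line through $p\in S$ precisely when $\cos\theta=\x{p}$, and $t$ crosses the horizontal grid line through $p$ precisely when $\sin\theta=\y{p}$. Because $\cos\theta$ is strictly decreasing and $\sin\theta$ strictly increasing on $[0,\pi/4]$, the angular orders of these two groups follow the reverse $x$-order and the $y$-order of the relevant points, which I can extract in linear time from $\Lx$ and $\Ly$ by discarding points whose coordinates lie outside the feasible ranges $[\sqrt{2}/2,1]$ and $[0,\sqrt{2}/2]$, respectively.

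For bisector events, I would replay the classification into three classes $\alpha_p$, $\beta_p$, $\gamma_p$ corresponding to the three pieces of the bisector of $f$ and $t$. Since in the Edge-FHL geometry $t$ sits to the upper-left of $f$ rather than to the upper-right, the diagonal middle piece of the bisector now has slope $+1$ instead of $-1$, but its overall topology (two axis-aligned half-lines joined by a diagonal segment, with the rightmost half-line switching from vertical to horizontal as $\theta$ crosses $\phi$) is preserved. Imposing $\|p-f\|_1=\|p-t\|_1+1/v$ on each piece yields, for each event class, an identity of the form $L(p)=M(\theta)$ where $L$ is a fixed linear combination of $\x{p}$ and $\y{p}$ (one of $\x{p}$, $\y{p}$, $\x{p}+\y{p}$, or $\x{p}-\y{p}$) and $M$ is a strictly monotone combination of $\cos\theta$, $\sin\theta$, and $1/v$ on $[0,\pi/4]$. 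Consequently, the corresponding subsequences of $\Lx$, $\Ly$, and $\Lu$ (in some cases read in reverse) are already sorted by event angle.

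The main obstacle I anticipate is carrying out this bisector case analysis carefully: identifying the half-plane region each piece occupies under the new relative position of $f$ and $t$, picking the right sorted list among $\{\Lx,\Ly,\Lu\}$ for each class, and splitting $\gamma$-type events at $\theta=\phi$ exactly as in the Vertex-FHL case. Once the analogues of items $(a)$--$(d)$ in Lemma~\ref{lemma:events} are in hand, the merging argument that concludes the proof transfers verbatim, yielding the claimed $O(n)$ time per pair after the $O(n\log n)$ preprocessing.
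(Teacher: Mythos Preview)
Your proposal is correct and follows essentially the same approach as the paper: reuse the $O(n\log n)$ preprocessing of $\Lx,\Ly,\Lu$, classify bisector events into $\alpha_p,\beta_p,\gamma_p$, derive for each class an identity $L(p)=M(\theta)$ with $M$ monotone on $[0,\pi/4]$, and merge the resulting sorted sublists with the grid events and the $\phi$-event in linear time. The only cosmetic discrepancy is your parametrization $f=(\cos\theta,0)$: the paper instead takes $f=(-\cos\theta,0)$, $t=(0,\sin\theta)$, so $t$ still lies to the upper-right of $f$, the diagonal piece of the bisector keeps slope $-1$, and item~(a) again reads $\x{p}+\y{p}=\tfrac12(-\cos\alpha_p+\sin\alpha_p+1/v)$, allowing $\Lu$ to be reused without an extra $(x-y)$-sorted list.
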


\begin{proof}

We can follow the arguments of Lemma~\ref{lemma:events}. 
Firstly, we note that there are $O(n)$ grid
events and their angular order can be obtained in linear time by
using both $\Lx$ and $\Ly$.

Given a point $p\in S$, let the events $\alpha_p$, $\beta_p$, and
$\gamma_p$ be defined as in the Vertex-FHL case.
Refer to Figure~\ref{fig:bisector-events}. Let $\Pi_1$ be the
subsequence of $\Lu$ containing all elements $p$ such that
$\alpha_p \in [0,\frac{\pi}{4}]$, $\Pi_2$ be the subsequence
of $\Lx$ containing all elements $p$ such that $\beta_p  \in
[0,\frac{\pi}{4}]$, and $\Pi_3$ be the subsequence of $\Lx$ that
contains all elements $p$ such that $\y{p}<\y{o}$ and $\gamma_p \in
[0,\phi]$, concatenated with the subsequence of
$\Ly$ that contains all elements $p$ such that $\x{p}>\x{o}$ and
$\gamma_p \in [\phi,\frac{\pi}{4}]$. Note that
$\Pi_1$, $\Pi_2$, and $\Pi_3$ can be built in linear time.

Given a point $p\in S$, the following statements are true:
\begin{itemize}
  \item[$(a)$] $\x{p}+\y{p}=\frac{1}{2}(-\cos\alpha_p+\sin\alpha_p+\frac{1}{v})$ for all points $p$ in $\Pi_1$.
  \item[$(b)$] $\x{p}=\frac{1}{2}(-\cos\beta_p-\sin\beta_p+\frac{1}{v})$ for all points $p$ in $\Pi_2$.
  \item[$(c)$] $\x{p}=\frac{1}{2}(-\cos\gamma_p+\sin\gamma_p+\frac{1}{v})$ for all points $p$ in $\Pi_3$ such that $\gamma_p<\phi$.
  \item[$(d)$] $\y{p}=\frac{1}{2}(-\cos\gamma_p+\sin\gamma_p+\frac{1}{v})$ for all points $p$ in $\Pi_3$ such that $\gamma_p>\phi$.
\end{itemize}

Let $\Gamma_1$ (resp. $\Gamma_2$, $\Gamma_3$) be the sequence
obtained by replacing each element $p$ in $\Pi_1$ (resp. $\Pi_2$,
$\Pi_3$) by $\alpha_p$ (resp. $\beta_p$, $\gamma_b$). Therefore,
by using similar arguments to those used in Lemma~\ref{lemma:events} the angular order of
all events can be obtained in $O(n)$ time,
once the lists $\Pi_x$, $\Pi_y$
and $\Pi_{x+y}$ have been precomputed.
\end{proof}

Consider now a small interval $[\theta_1,\theta_2]$ in which no event occurs. Observe that, after the coordinate system redefinition, we have $f=x_{\theta}=(-\cos\theta,0)$, and $t=y_{\theta}=(0,\ell\sin \theta)$. Let $p\in S$ be a point that uses the highway to reach the facility; since only the $y$-coordinate of $t$ changes, its distance to $f$ can be expressed as $c_1\pm\sin\theta$ for some $c_1>0$. Analogously, if $p$ walks to $f$, its distance is of the form $c_1\pm\cos\theta$ for some $c_1>0$. That is, the distance between a point of $S$ and $f$ in any interval is of the form $c_1+c_2\sin\theta+c_3\cos\theta$ for some constants $c_1>0$ and $c_2,c_3\in\{-1,0,1\}$.

\begin{theorem}
The Edge-FHL-problem can be solved in $O(n^3)$ time.
\end{theorem}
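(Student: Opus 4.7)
The plan is to mirror the strategy used in the proof of Theorem~\ref{theorem:vertex}, replacing the enumeration over vertices of $G$ with an enumeration over pairs of perpendicular grid lines. First I would precompute the three sorted lists $\Lx$, $\Ly$, and $\Lu$ in $O(n\log n)$ time. Then, for the fixed sub-case described before the theorem (i.e.,\ a choice of $\theta\in[0,\frac{\pi}{4}]$ together with a fixed combination of orientations and the assignment $f\in e_h$, $t\in e_v$), I would iterate over every one of the $O(n)\times O(n) = O(n^2)$ pairs $(e_h,e_v)$, so that the intersection $o=e_h\cap e_v$ becomes the origin and $f=x_\theta=(-\cos\theta,0)$, $t=y_\theta=(0,\sin\theta)$.

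For each such pair, I would invoke Lemma~\ref{lemma:events2} to obtain in $O(n)$ time the angular order of all $O(n)$ bisector, grid, and $\phi$-events in $[0,\frac{\pi}{4}]$. These events partition $[0,\frac{\pi}{4}]$ into $O(n)$ maximal subintervals, and on each subinterval the objective function has the closed form
\[
g(\theta) \;=\; \Phi(x_\theta,y_\theta) \;=\; b_1 + b_2\sin\theta + b_3\cos\theta,
\]
as observed just before the theorem statement. The minimum of such an expression over a subinterval can be computed in $O(1)$ time, since its derivative has at most one zero in $[0,\frac{\pi}{4}]$ and one only needs to compare the values at that critical point and at the two endpoints.

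Next I would show that the constants $b_1,b_2,b_3$ can be maintained efficiently as $\theta$ sweeps across events. A bisector event affects only one point of $S$ (it switches between the ``uses the highway'' branch and the ``walks directly'' branch), so it causes an $O(1)$ update to $(b_1,b_2,b_3)$. A grid event similarly changes the sign contribution of a single point and is also $O(1)$. The single $\phi$-event may change the topology of the bisector globally and so may require revisiting the classification of all points, costing $O(n)$, but this occurs only once per pair. Summing, each pair is processed in $O(n)$ time; multiplied by the $O(n^2)$ pairs and absorbed into the constant-factor cost of the eight orientation sub-cases (and the seven additional quarter-ranges of $\theta$), the total running time is $O(n^3)$.

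The main obstacle I anticipate is the bookkeeping across the $\phi$-event together with the verification that, in the Edge case, bisector and grid events still each change the expression of $g(\theta)$ by only one summand; relations $(a)$--$(d)$ of Lemma~\ref{lemma:events2} together with the distance decomposition $c_1+c_2\sin\theta+c_3\cos\theta$ per point make this argument essentially identical to the Vertex case, so no genuinely new idea is needed beyond carefully re-deriving the per-point contribution under the new parametrization $f=(-\cos\theta,0)$, $t=(0,\sin\theta)$.
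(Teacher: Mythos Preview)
Your proposal is correct and follows essentially the same approach as the paper's own proof: enumerate the $O(n^2)$ pairs $(e_h,e_v)$, use Lemma~\ref{lemma:events2} to obtain the $O(n)$ events in linear time, observe that $g(\theta)=b_1+b_2\cos\theta+b_3\sin\theta$ on each induced subinterval, minimize in $O(1)$ per subinterval, update the coefficients in $O(1)$ at ordinary events and $O(n)$ at the single $\phi$-event, and sum to $O(n^3)$. Your write-up is in fact slightly more detailed than the paper's (you make explicit the per-point update argument and the absorption of the eight orientation sub-cases), but the underlying argument is identical.
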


\begin{proof}
We can use a method similar to the one used in the Vertex-FHL-problem.
Let $e_h$ be a horizontal line of $G$ and $e_v$ be a vertical line
of $G$. 

Using Lemma~\ref{lemma:events2}, we obtain in linear time
the angular order of the $O(n)$ events associated with $e_h$ and
$e_v$. The events induce a partition of $[0,\frac{\pi}{4}]$ into
maximal intervals. For each of those intervals the objective
function has the form
$g(\theta):=\Phi(f,t)=\Phi(x_{\theta},y_{\theta}
)=b_1+b_2\cos\theta+b_3\sin\theta$, where $b_1>0$, and $b_2,b_3\in\mathbb{Z}$ are constants. 
This problem has constant
size, hence the minimum of $g(\theta)$ can be found in $O(1)$ time.
Furthermore, the expression of $g(\theta)$ can be updated in
constant time when $\theta$ crosses an event point distinct of
$\phi$ when it goes from $0$ to $\frac{\pi}{4}$. In the case where
$\theta$ crosses $\phi$, $g(\theta)$ can be updated in at most
$O(n)$ time. Then the problem subject to $f\in e_h$ and $t\in e_v$
can be solved in linear time. It gives an overall $O(n^3)$ time
complexity because $G$ has $O(n^2)$ pairs consisting of a horizontal and a
vertical line.
\end{proof}

\section{Experimental results}\label{sec_exper}
Similar to \cite{espejo11}, we explore examples of solutions to the FHL-problem for different values
of the length of the line segment. The problem instance is given by the unweighted points
with coordinates $(-4,0)$, $(-3,-1)$, $(12,8)$,
$(13,5)$, and $(13,7)$ as in Lemma~\ref{lemma:counter-example} and we consider locating a
highway for different values of length and speed. Given a fixed value of speed, say $v=2$,
Figure~\ref{fig_examples} shows the location of the optimal highways for some values of
$\ell$. Note that the case $\ell=0$ is the Fermat-Weber problem for the $L_1$-metric. The
highway's length and the associated total transportation cost for each of these solutions
can be seen in Table~\ref{tab_examples}. The optimal solution for each of the cases (and
its associated cost) has been obtained with the help of a computer.

Observe that, for some values of $\ell$, the optimal solution satisfies condition $(a)$
of Lemma~\ref{lemma:endpoint}, but in other situations condition $(b)$ is satisfied
instead (see Figure~\ref{fig_examples} d), where the highway's length has been set to
$13.41$). Experimentally we observed that increasing the highway's length decreases the
total transportation cost until $\ell=\sqrt{305}$, in which a total cost of $5+2\ell/v$ is
obtained (see Figure~\ref{fig_examples} e)). Afterwards the cost gradually increases until
we locate a highway so long that no point of $S$ uses it to reach $f$. We also note that
for this demand point set the highway's speed has a small impact on the optimal solution.
Indeed,  increasing the highway's speed changes the total cost but  the location of the
highway in the above instance is unaffected by the highway's speed (provided that $v>1$).
The fourth column in Table~\ref{tab_examples} gives the small variation of the total cost
with respect to the speed. This suggests the following open problem: given an
instance of the FHL-problem, can we efficiently compute the highway's
length that minimizes the total transportation cost?

\begin{figure}
    \centering
    \includegraphics[scale=0.5]{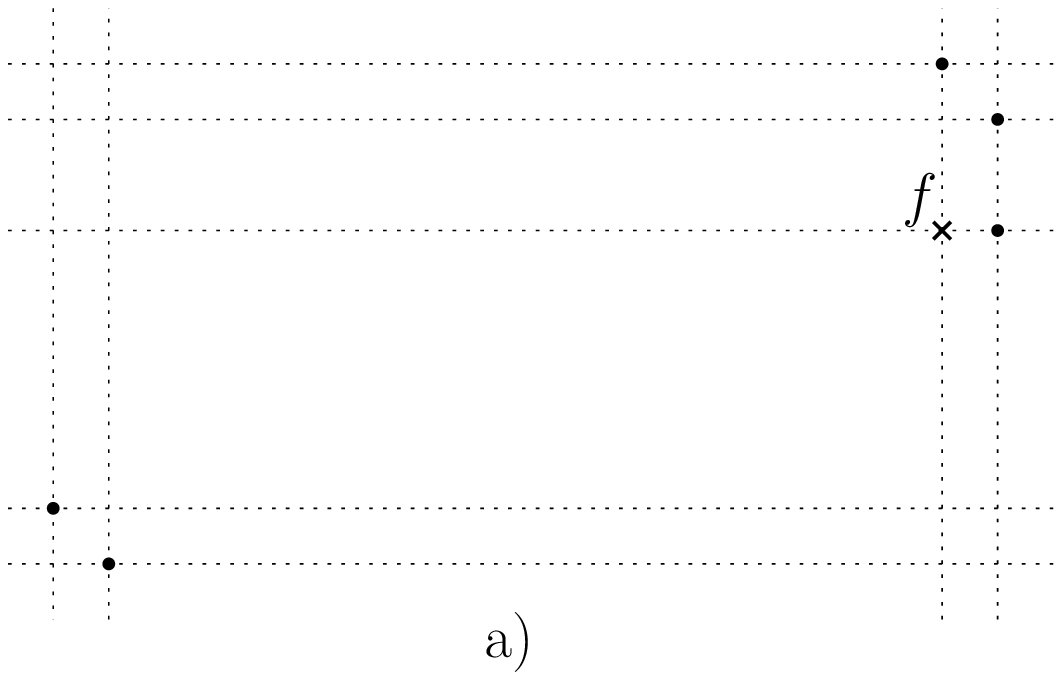}
    \includegraphics[scale=0.5]{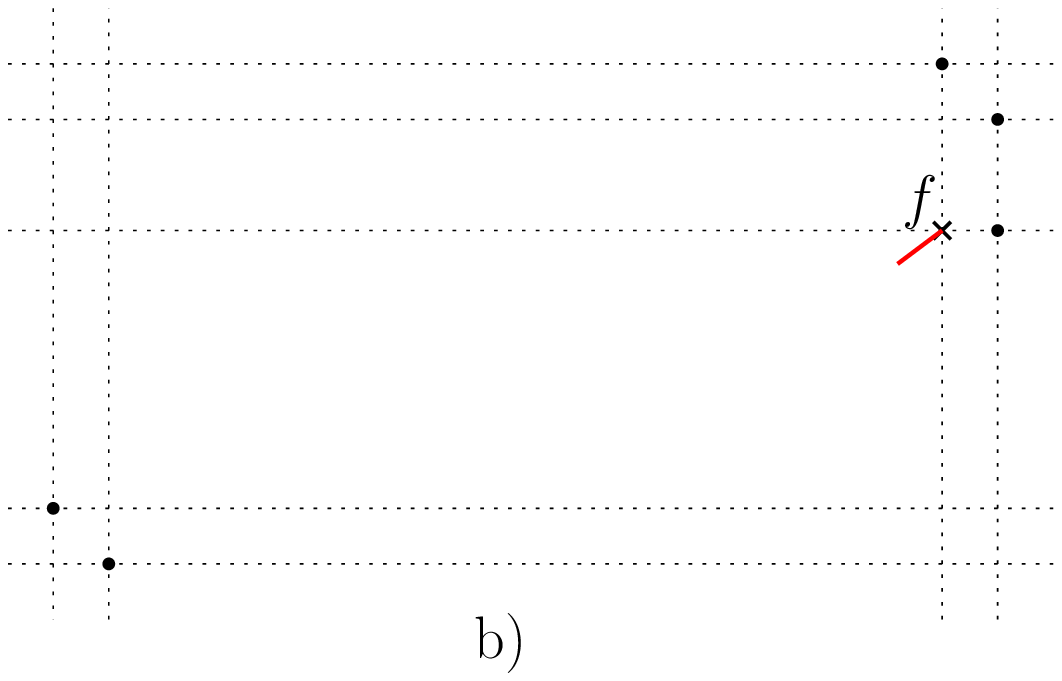}
    \includegraphics[scale=0.5]{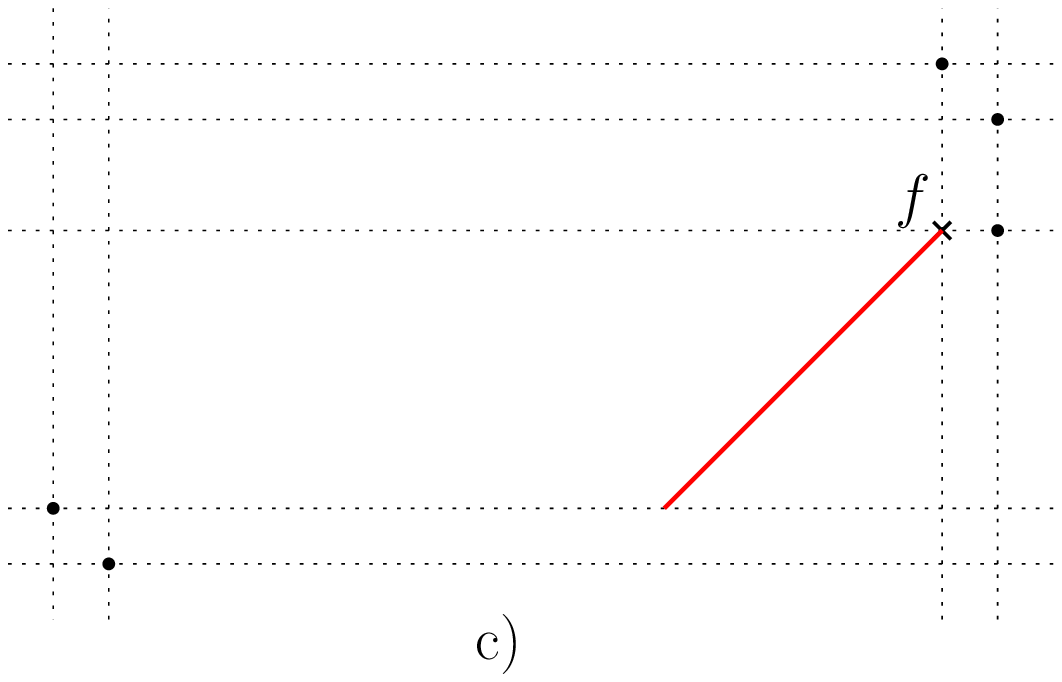}
    \includegraphics[scale=0.5]{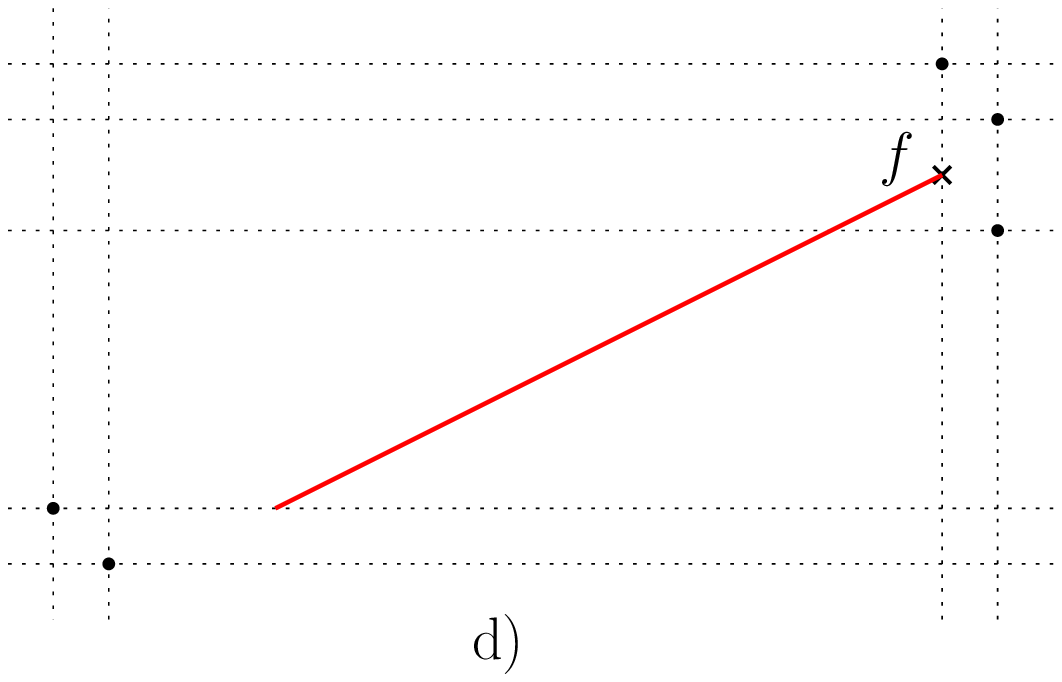}
    \includegraphics[scale=0.5]{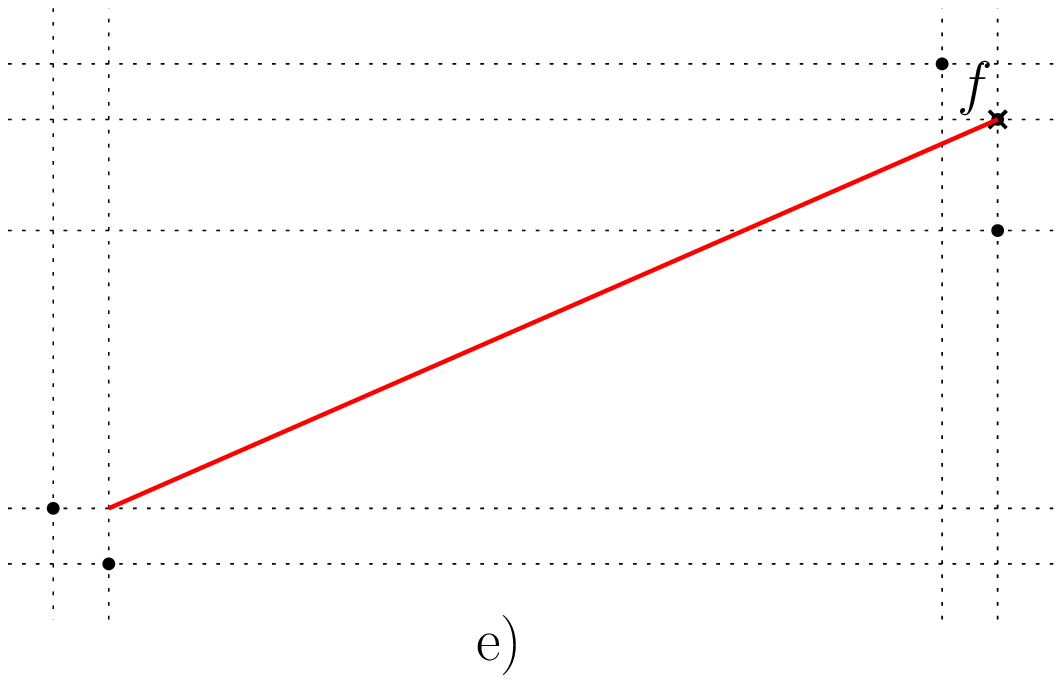}    
    \caption{\small{Solution of the same instance of the FHL-problem for different values
of $\ell$. The optimal highway is depicted in red (and the endpoint containing $f$ as a
cross). The exact highway's length and the associated total transportation cost can be
seen in Table \ref{tab_examples}.}}
    \label{fig_examples}
\end{figure}

\begin{table}
\small
\label{tab_examples}
\centering
\begin{tabular}{|c|c|c|c|c|}
\hline
\rowcolor[gray]{0.95}\hspace*{0.1cm} Figure~\ref{fig_examples} & $\ell$ & $v$ & Cost & Ratio\\ \hline \hline
a) & 0 & - & 49 & 1 \\ \hline
b) & 1 & 2 & 46 &0.93 \\
   &  & 4 & 45.5 & 0.92 \\
   & & $10^6$ & 45 & 0.91 \\ \hline
c) & 7.07 & 2 & 34.07 & 0.7 \\
   &  & 4 & 30.54 & 0.62 \\
   &  & $10^6$ & 27 & 0.55 \\ \hline
d) & 13.41 & 2 & 27.41 &0.56 \\
   &  & 4 & 20.71 & 0.42 \\
   &  & $10^6$ & 14 & 0.29 \\ \hline
e) & 16.55 & 2 & 9 & 0.18 \\
   & & 4 & 8.5 & 0.17 \\
   & & $10^6$ & 8 & 0.16 \\ \hline
\end{tabular}
\caption{\small{Total transportation cost as a function of the highway's length and speed.
The last column shows how much does the highway improve the total transportation cost
(compared to the case in which only a facility is located)}}
\end{table}

\section{Concluding remarks}\label{section:conclusions}


As further research, it would be worth studying the same problem in other
metrics or using different optimization criteria. Another interesting variant would be to
consider the problem when the length of the highway is not given in advance and it is a
variable in the problem. Additionally, we could consider a similar distance model  in
which the clients can enter and exit  the highway at any point (called \emph{freeway}
in~\cite{bae09}). 

Motivated from the experimental results of Section~\ref{sec_exper}, we can deduce that
the highway's length has a strong impact on the optimal solution. As one would expect,
when the highway's length is small, the total cost barely changes. We obtain a similar
effect when the highway to locate is very long, since traveling to the opposite endpoint
takes more time than walking directly to the facility. Hence, it would be interesting to
consider a variation of the problem in which we can also adjust the highway's length.
Specially, one would like to find a balance between the cost of constructing a longer
highway and the improvement in the total transportation cost. 

\small

\bibliographystyle{plain}
\bibliography{carreteras2}{}

\begin{thebibliography}{1}

\bibitem{abellanas03}
M.~Abellanas, F.~Hurtado, C.~Icking, R.~Klein, E.~Langetepe, L.~Ma, B.~Palop,
  and V.~Sacrist\'an.
\newblock {V}oronoi diagram for services neighboring a highway.
\newblock {\em Information Processing Letters}, 86:283--288, 2003.

\bibitem{ahn07}
H.-K. Ahn, H.~Alt, T.~Asano, S.~W. Bae, P.~Brass, O.~Cheong, C.~Knauer, H.-S.
  Na, C.-S. Shin, and A.~Wolff.
\newblock Constructing optimal highways.
\newblock In {\em Proceedings of the 13th Computing: The Australasian Theory
  Symposium (CATS'07)}, pages 7--14, 2007.

\bibitem{aichholzer02}
O.~Aichholzer, F.~Aurenhammer, and B.~Palop.
\newblock Quickest paths, straight skeletons, and the city {V}oronoi diagram.
\newblock {\em Discrete \& Computational Geometry}, 31:17--35, 2004.

\bibitem{aloupis10}
G.~Aloupis, J.~Cardinal, S.~Collette, F.~Hurtado, S.~Langerman, J.~O'Rourke,
  and B.~Palop.
\newblock Highway hull revisited.
\newblock {\em Computational Geometry: Theory and Applications}, 43:115--130,
  2010.

\bibitem{bae09}
S.~W. Bae, M.~Korman, and T.~Tokuyama.
\newblock All farthest neighbors in the presence of highways and obstacles.
\newblock In {\em Proceedings of the 3rd International Workshop on Algorithms
  and Computation (WALCOM'09)}, LNCS, pages 71--82, 2009.

\bibitem{cardinal08}
J.~Cardinal, S.~Collette, F.~Hurtado, S.~Langerman, and B.~Palop.
\newblock Optimal location of transportation devices.
\newblock {\em Computational Geometry: Theory and Applications}, 41:219--229,
  2008.

\bibitem{durier1985}
R.~Durier and C.~Michelot.
\newblock Geometrical properties of the fermat-weber problem.
\newblock {\em European Journal of Operational Research}, 20(3):332--343, 1985.

\bibitem{espejo11}
I.~Espejo and A.~M. Rodr\'{\i}guez-Ch\'{\i}a.
\newblock Simultaneous location of a service facility and a rapid transit line.
\newblock {\em Computers and Operations Research}, 38:525--538, February 2011.

\bibitem{korman08}
M.~Korman and T.~Tokuyama.
\newblock Optimal insertion of a segment highway in a city metric.
\newblock In {\em Proceedings of the 14th international conference on Computing
  and Combinatorics (COCOON'08)}, LNCS, pages 611--620, 2008.

\end{thebibliography}

\end{document}